\theoremstyle{plain}
\newtheorem{theorem}{Theorem}[section]
\newtheorem{lemma}[theorem]{Lemma}
\newtheorem{corollary}[theorem]{Corollary}
\newtheorem{proposition}[theorem]{Proposition}
\newtheorem{conjecture}[theorem]{Conjecture}
\theoremstyle{definition}
\newtheorem{definition}[theorem]{Definition}
\newtheorem{example}[theorem]{Example}
\theoremstyle{remark}
\newcommand{\mathcalc}{\mathcal{C}}
\newcommand{\mathcalv}{\mathcal{V}}
\newcommand{\rank}{\textrm{rank}}
\newcommand{\parityc}{C_\oplus}
\newcommand{\ranktwo}{\textrm{rank}_2}
\newcommand{\bsoplus}{\textrm{bs}_\oplus}
\newcommand{\wbs}{\textrm{wbs}}
\newcommand{\bs}{\textrm{bs}}
\newcommand{\wbsoplus}{\wbs_\oplus}
\newcommand{\dc}{\textrm{DC}}
\newcommand{\la}{\langle}
\newcommand{\ra}{\rangle}
\newcommand{\defeq}{\stackrel{\textrm{def}}{=}}
\def\squareforqed{\hbox{\rlap{$\sqcap$}$\sqcup$}}
\def\qed{\ifmmode\squareforqed\else{\unskip\nobreak\hfil
\penalty50\hskip1em\null\nobreak\hfil\squareforqed
\parfillskip=0pt\finalhyphendemerits=0\endgraf}\fi}
\newenvironment{proofof}[1]{\begin{trivlist}%
\item[]{\flushleft\em Proof of #1. }}
{\qed\end{trivlist}}
\newcommand{\comment}[1]{}
\begin{document}
  \title{On the parity complexity measures of Boolean functions}
\author{
{\normalsize Zhiqiang
Zhang}\footnote{Email:zhang@itcs.tsinghua.edu.cn. Supported  in part by the National Natural Science Foundation of China Grant 60553001, and the National Basic Research Program of China Grant 2007CB807900, 2007CB807901.} \\
{\normalsize Institute for Theoretical Computer Science, Center for Advanced Study,}\\
{\normalsize Tsinghua University, Beijing, 100084, P.R. China}\\
{\normalsize and}\\
{\normalsize Yaoyun Shi}\footnote{Email:shiyy@eecs.umich.edu.
Supported in part by National Science Foundation of the United
States under the grants
0347078  and 0622033.}\\
{\normalsize Department of Electrical Engineering and Computer Science,}\\
{\normalsize University of Michigan, 2260 Hayward Street, Ann Arbor,
MI 48109-2121, USA} \\
} \date{}

  \maketitle

\abstract{\noindent The parity decision tree model extends the decision tree model by allowing
the computation of a parity function in one step.
We prove that the deterministic parity decision tree complexity of any Boolean
function is polynomially related to the non-deterministic complexity of the function or its complement.
We also show that they are polynomially related to an analogue of the block
sensitivity.  We further study parity decision trees in their relations with an intermediate variant of the decision trees,
as well as with communication complexity.
}
\section{Introduction and summary of results}
\label{sec:intro}
\noindent The decision tree model  is perhaps the simplest
model of computation. It is, however, capable of capturing
the inherent complexity of many natural computational problems.
Its relations with other models of computation
have also proved to be useful. In this section, we will first review some definitions and key results on decision trees,
before we present a summary of our results.

Let $f:\{0, 1\}^n\rightarrow \{0, 1\}$ be a Boolean function throughout this paper,
unless specified otherwise. Formally, a decision tree algorithm for computing  $f$ is
a full binary tree $T$, labeled as follows: (1) each non-leaf vertex is labeled with an index $i\in\{1, 2, ..., n\}$
to the input bits, (2) each leaf  and each edge is labeled with either $0$ or $1$.
The computation of $T$ on an input $x\in\{0, 1\}^n$ is the path that starts at the root
and follows the $x_i$ edge from a vertex labeled with $i$. The leaf label reached by this path is the output
of $T$ on $x$. The depth of the tree is the worst-case complexity of the algorithm.
The minimum depth of all decision trees computing $f$ is the {\em deterministic decision tree complexity} of $f$,
denoted by $D(f)$.

A set of decision trees {\em non-deterministically computes} $f$, if for any input $x$,
$f(x)=1$, if and only if a decision tree from the set outputs $1$.
The {\em non-deterministic decision tree complexity} of $f$, denoted by $C^1(f)$,
is the smallest integer $k$ such that $f$ is computed non-deterministically by a set of depth-$k$
decision trees. Alternatively, $C^1(f)$ is characterized by the smallest integer $k$, such
that for any input $x$ with $f(x)=1$, there is a subset $S\subseteq\{1, ..., n\}$ such that any input $x'$
with the same value as $x$ on bits indexed by $S$ must also have $f(x')=1$. Thus $C^1(f)$ is also commonly
called the {\em $1$-certificate complexity}. The $0$-certificate complexity, $C^0(f)\defeq C^1(1-f)$,
and the {\em certificate complexity}, $C(f)\defeq \max\{C^0(f), C^1(f)\}$.

It follows straightforwardly from the definitions that $C(f)\le D(f)$. A key result~\cite{Beals:2001:QLB} is,
for any $f$,
\begin{equation}
\label{eqn:dcequiv}
D(f)\le C^1(f)C^0(f).
\end{equation}
Thus for any Boolean function, its deterministic complexity is polynomially related with its non-deterministic complexity
or that of its complement.
This is in sharp contrast with the fact that for Turing machine computations the corresponding question
of P versus NP remains open. In fact, several other complexity measures
such as randomized and quantum decision tree complexities are also known to be polynomially related to
the deterministic decision tree complexity. A comprehensive survey on the subject
is \cite{Buhrman:decision:survey}  by Buhrman and de Wolf.

If in a decision tree, each non-leaf vertex is labeled with a
$c\in\{0, 1\}^n$ instead, and the computation path follows the edge
labeled with $\la x, c\ra \defeq \sum_i x_i c_i\mod 2$, we call this
extended decision tree a {\em parity decision tree} and the
corresponding complexity as the {\em parity decision tree
complexity}, denoted by $D_\oplus(f)$. This model was first defined
in \cite{Kushilevitz:learning}, which derived some simple properties
of the complexity. The {\em parity certificate complexities},
$C^0_\oplus(f)$, $C^1_\oplus(f)$, and $C_\oplus(f)$, can be defined
in analogy to the certificate complexities (see
Definition~\ref{def:cert}). They measure the non-deterministic
parity decision tree complexities of $f$ (or $1-f$). Our first main
result is in analogy to~(\ref{eqn:dcequiv}).

\begin{theorem}\label{thm:dc}
For any Boolean function $f$, $D_\oplus(f)\le C_\oplus^0(f)C_\oplus^1(f)$.
\end{theorem}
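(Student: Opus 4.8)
The plan is to follow the proof of the classical bound~(\ref{eqn:dcequiv}), translating ``variables'' into linear forms and ``subcubes'' into affine subspaces of $\mathbb{F}_2^n$. Along any root-to-leaf branch of a parity decision tree the inputs consistent with the answers seen so far form an affine subspace $A = \{x' : \langle x',c\rangle = b_c \text{ for every queried } c\}$, and the queried vectors $c$ span a subspace $V$ with $\dim V = \mathrm{codim}\,A$. I will grow the tree greedily in \emph{rounds}: at a node whose subspace $A$ is not monochromatic under $f$, choose any $z \in A$ with $f(z)=1$ together with a parity $1$-certificate of $z$, that is, a set of at most $C_\oplus^1(f)$ forms whose values at $z$ confine $x'$ to an all-$1$ affine subspace; query all of these forms, which appends a complete binary subtree of depth at most $C_\oplus^1(f)$, and recurse on every child that is still non-monochromatic. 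Each round costs at most $C_\oplus^1(f)$, so it suffices to bound the number of rounds on any branch by $C_\oplus^0(f)$.

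The heart of the argument, and the step I expect to require the most care, is the parity analogue of the observation that a $1$-certificate and a $0$-certificate must clash on a shared variable. Given the $1$-certificate of $z$ and a parity $0$-certificate of some input $x$, let $V_1$ and $V_0$ be their form-spaces and $B = z + V_1^{\perp}$, $A_0 = x + V_0^{\perp}$ the all-$1$ and all-$0$ affine subspaces they define. A common point of $B$ and $A_0$ would have $f$-value both $1$ and $0$, so $B \cap A_0 = \emptyset$; equivalently $z - x \notin V_1^{\perp} + V_0^{\perp} = (V_1 \cap V_0)^{\perp}$. Hence there is a form $\ell \in V_1 \cap V_0$ with $\langle z,\ell\rangle \neq \langle x,\ell\rangle$ --- a single parity lying in both certificates on which they disagree, which is exactly what replaces the shared conflicting variable of the Boolean proof.

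It remains to count rounds. Fix a branch performing $r$ rounds and let $V_i$ be the span of the forms queried in its first $i$ rounds. The node initiating the final round has a non-monochromatic subspace $A_{r-1}$, so it contains a $0$-input $x$; fix a parity $0$-certificate of $x$ with form-space $V_0$, where $\dim V_0 \le C_\oplus^0(f)$. For each $i \le r-1$ the round chooses $z_i \in A_{i-1}$ with $f(z_i)=1$, and since $x \in A_{i-1}$ as well, the lemma gives $\ell_i \in V_1^{(i)} \cap V_0$ with $\langle z_i,\ell_i\rangle \neq \langle x,\ell_i\rangle$; as every form of $V_{i-1}$ is constant on $A_{i-1} \ni z_i, x$, we have $\ell_i \notin V_{i-1}$, whence $\dim(V_i \cap V_0) > \dim(V_{i-1}\cap V_0)$ and so $\dim(V_{r-1}\cap V_0) \ge r-1$. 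The remaining subtlety --- needed to obtain the exact product rather than $(C_\oplus^0(f)+1)\,C_\oplus^1(f)$ --- is that $A_{r-1}$ is non-monochromatic and therefore cannot already lie inside the all-$0$ subspace $A_0$, so $V_0 \not\subseteq V_{r-1}$ and $\dim(V_{r-1}\cap V_0) < \dim V_0 \le C_\oplus^0(f)$. Combining, $r-1 < C_\oplus^0(f)$, i.e.\ $r \le C_\oplus^0(f)$, and the total depth is at most $C_\oplus^0(f)\,C_\oplus^1(f)$.
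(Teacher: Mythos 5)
Your proof is correct and follows essentially the same route as the paper's: repeatedly query a parity $1$-certificate at a non-monochromatic node and show that each round makes one unit of linear-algebraic progress against a $0$-certificate, so there are at most $C_\oplus^0(f)$ rounds of cost $C_\oplus^1(f)$ each. The paper does the same bookkeeping dually --- it fixes a surviving $0$-input and shows its minimal parity certificate for the restricted function loses at least one row per round --- but the key step (the form space of a live $1$-certificate must meet that of a live $0$-certificate outside the already-queried span, your ``clash'' lemma) is the same as the paper's row-space intersection argument.
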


The {\em block-sensitivity} of $f$, $\bs(f)$, is the smallest integer $k$ such that for any input $x\in\{0, 1\}^n$
there are $k$ pair-wise disjoint subsets of $\{1, ..., n\}$ such that flipping all bits in any of those subsets flips $f(x)$.
Nisan~\cite{Nisan:CREW} showed that, for any $f$,
\begin{equation}\label{eqn:bsequiv}
C(f)\le \bs^2(f).
\end{equation}
Together with the simple relation that $\bs(f)\le C(f)$, this result shows that $\bs(f)$
is polynomially related with $C(f)$, thus with $D(f)$. We define (in Definition~\ref{def:bs}) the {\em parity block sensitivity}
$\bs_\oplus(f)$, and show that a similar relation holds.

\begin{theorem}\label{thm:bs} For any Boolean function $f$, $\bs_\oplus(f)\le C_\oplus(f)\le \bs_\oplus^2(f)$.
\end{theorem}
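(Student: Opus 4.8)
The plan is to prove the two inequalities separately: the left inequality $\bsoplus(f)\le\coplus(f)$ by a direct argument that every sensitive block must be ``seen'' by an optimal parity certificate, and the right inequality $\coplus(f)\le\bsoplus^2(f)$ by adapting Nisan's proof of~(\ref{eqn:bsequiv}) to the linear-algebraic setting, replacing disjoint coordinate blocks by independent subspaces and the fixed coordinate set of a certificate by the direction space of a monochromatic affine subspace.

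For $\bsoplus(f)\le\coplus(f)$ I would fix an input $x$ realizing $\bsoplus(f)$, say with $f(x)=b$, and take a minimum parity $b$-certificate through $x$, that is, an affine subspace $x+U$ with $f\equiv b$ on it and with the codimension of $U$ equal to $C^b_\oplus(f,x)\le\coplus(f)$. Each sensitive block of the witnessing family produces a flip direction $v_i$ with $f(x+v_i)\ne b$, hence $x+v_i\notin x+U$ and $v_i\notin U$. The independence condition built into Definition~\ref{def:bs} is exactly what guarantees that the images of these directions in the quotient $\mathbbfn/U$ remain linearly independent; since $\dim(\mathbbfn/U)$ is the codimension of $U$, the number of blocks is at most $C^b_\oplus(f,x)\le\coplus(f)$. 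Taking the maximum over $x$ gives the bound.

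For $\coplus(f)\le\bsoplus^2(f)$ I would fix $x$ and greedily select a maximal family $V_1,\dots,V_\ell$ of independent minimal sensitive blocks at $x$, then verify three things. First, $\ell\le\bsoplus(f)$, immediately from the definition of parity block sensitivity. Second, each minimal block has dimension at most $\bsoplus(f)$: this is the parity analogue of the classical fact that minimal sensitive blocks are small, which I would prove by passing to the flipped point $x+v_i$ and showing that minimality there exposes $\dim V_i$ independent sensitivities, so that $\dim V_i\le\bsoplus(f)$. Third, the direct sum $W=V_1\oplus\cdots\oplus V_\ell$ certifies the value of $f$: if some $y$ with $y-x\notin W$ satisfied $f(y)\ne f(x)$, then $y-x$ would generate a sensitive block independent of all the $V_i$, contradicting maximality; hence $f$ is constant on a complementary coset of $W$ through $x$, giving a certificate of codimension $\dim W=\sum_i\dim V_i\le\ell\cdot\bsoplus(f)\le\bsoplus^2(f)$.

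The main obstacle is that ``disjointness'' of blocks, which over the Boolean cube is just set-disjointness, must be replaced by the correct linear-independence notion of Definition~\ref{def:bs}, and this notion has to be simultaneously strong enough for the lower bound (so that sensitive directions stay independent after projecting into any certificate quotient, ruling out the degenerate overcounting that a naive ``independent sensitive vectors'' definition would allow: for the parity function all $n$ single-bit flips are sensitive, yet $\coplus=1$) and loose enough for the upper bound (so that a maximal independent family spans a subspace whose complement actually certifies $f$). Carrying out the minimal-block dimension bound and the maximality-to-certificate step over $\groupftwo$ is where the work lies: one manipulates quotient maps and vector-space complements in place of the set unions and intersections used in the classical argument, and must check that minimality of a block is preserved under the flip used to expose its internal sensitivities.
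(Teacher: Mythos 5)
Your proposal is built on a guessed definition of $\bsoplus$ --- a maximal family of ``independent'' sensitive subspaces --- but the paper's Definition~\ref{def:bs} is structurally different: $\wbsoplus(f,x)=\min_B \bs(f_B,B^{-1}x)$ (minimum over all changes of basis of the \emph{classical} block sensitivity), and then $\bsoplus(f)=\max_H \wbsoplus(f|_H)$ over cosets $H$. Every step where you appeal to ``the independence condition built into Definition~\ref{def:bs}'' is exactly where your argument has no content: you correctly observe (via the parity function) that mere linear independence of sensitive directions cannot be the right notion, but you never supply the notion, and the actual definition is not of that form at all. Concretely: (i) in the lower bound, linear independence of flip directions $v_i$ does \emph{not} imply their images in $\groupftwon/U$ are independent (take $v_1+v_2\in U$ with $v_1,v_2\notin U$); the paper instead chooses the basis $B$ in which the optimal parity certificate becomes a coordinate subcube and invokes the classical $\bs(f_B,B^{-1}x)\le C(f_B,B^{-1}x)$ together with Eqn.~(\ref{eqn:CoplusC}), which is a one-line reduction. (ii) In the upper bound, your claim that ``minimality exposes $\dim V_i$ independent sensitivities, so $\dim V_i\le\bsoplus(f)$'' fails for the same reason: $n$ independent sensitive directions at a point give no lower bound on $\min_B\bs(f_B,\cdot)$ (again the parity function). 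What actually makes the block-size bound work in the paper is that $f$ restricted to the coset spanned by a minimal block is the AND function, which is sensitive to \emph{every} nonzero direction at its distinguished point, so no change of basis can reduce its block sensitivity below $|S_i|$; and one must invoke the $\max_H$ in Definition~\ref{def:bs} to transfer this lower bound on the restriction back to $\bsoplus(f)$. Your sketch never restricts to a coset, so it cannot use monotonicity, and without it the step is false.

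The paper's actual proof of the upper bound also differs from your plan in how it sets up the blocks: using rotation invariance it normalizes to a basis in which $\wbsoplus(f,0)=\bs(f,0)$ and then works with ordinary disjoint minimal blocks $S_1,\dots,S_b$; the certificate is the coordinate coset fixing $\bigcup_i S_i$, giving $\coplus(f)\le\sum_i|S_i|\le \wbsoplus(f)\bsoplus(f)\le\bsoplus^2(f)$. Your maximality-implies-certificate step (a maximal independent family spans $W$ and $f$ is constant on a complementary coset $x+W'$) is the right shape and would survive translation into this setting, but as written it is the only part of the proposal that does not lean on the undefined independence notion. To repair the proposal you would need to (a) state and use the actual definition, (b) replace the quotient argument in the lower bound by the change-of-basis reduction to the classical $\bs\le C$, and (c) replace ``independent sensitivities'' in the block-size bound by the restriction-to-a-coset argument identifying the AND function.
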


The above three classes of parity complexities we study satisfy the following symmetry properties.
Let $c\in \{0,1\}^n$. The function obtained by shifting $f$ by $c$ is
$f_c: x\mapsto f(x+ c)$. Let $A$ be a linear transformation on
$\{0,1\}^n$ (as the $n$-dimensional linear space over the field
$\mathbb{F}_2$), $f_A$ is the function defined as $f_A(x) = f(Ax)$.
For any coset $H$ of  $\{0,1\}^n$ (i.e. a shift of a subspace), denote by $f|_H$ the restriction of $f$ on $H$. A
complexity measure $\Theta$ defined on Boolean functions is said to be
invariant under shift if $\Theta(f_c)=\Theta(f)$ for any $c\in\{0,1\}^n$. It is said to be invariant under rotation if $\Theta(f_A)=T(f)$ for any
invertible transformation $A$ over $\mathbb{F}_2^n$.

When $\Theta$ is invariant under shift and rotation, we can extend the domain of $\Theta$ to include any function $g$ defined on a coset $H$ of $\{0,1\}^n$.
For such a $g$, and a coset $H=c+S$ where $c\in\{0,1\}^n$ and $S$ is a subspace with basis $\{e_1,\cdots,e_m\}$, we define
$g':\{0, 1\}^m\rightarrow\{0, 1\}$ as follows,
\begin{equation}\label{eqn:change_g}
g'(x_1x_2\cdots x_m) \defeq g(c+x_1e_1+\cdots+x_me_m)\quad\textrm{for all $x\in\{0, 1\}^m$,}
\end{equation}
and extend $\Theta$ to $g$ by setting,
\begin{equation}\label{eqn:extend}
\Theta(g)\defeq \Theta(g').
\end{equation}
Then $\Theta(g)$ is well defined, as it is independent of the choice
of the basis and $c$ for $H$ due to $\Theta$ being invariant under
shift and rotation. We say a complexity measure $\Theta$ invariant
under shift and rotation is
 {\em monotone} if for any $n\ge1$, $f:\{0, 1\}^n\rightarrow\{0, 1\}$, and coset $H\subseteq\{0, 1\}^n$, $\Theta(f|_H)\leq \Theta(f)$.

All the classical complexity measures of Boolean functions such as
decision tree complexity, certificate complexity, and block sensitivity
are invariant only under shift but not under rotation.
The parity version complexities we study are, however, invariant under both shift and rotation,
and are monotone.

To contrast those two sets of complexity measures, we may ``symmetrize'' every classical complexity measure $\Theta$
to $\Theta_I$ by defining $\Theta_I(f)\defeq\min_{B} \Theta(f_B)$, where $B$ takes value from all invertible linear transformations.
A natural question is if each parity complexity is identical, or at least polynomially related,
to the rotation invariant version of the corresponding classical complexity.
We show that this is not the case. In this sense, the parity decision tree model is an inherently more powerful model
than the decision tree model.

\begin{theorem} \label{thm:exp}
  For infinitely many $n$, there exists $f_n:\{0, 1\}^n\rightarrow\{0, 1\}$, such that $D_\oplus(f_n) = O(\log n)$ and   $D_I(f_n)=\Theta(n)$.
\end{theorem}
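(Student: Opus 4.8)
The plan is to exhibit an explicit ``parity multiplexer.'' Fix $k=\lceil 2\log_2 n\rceil$, split the $n$ input bits into an address block $x_1,\dots,x_k$ and a value block $z=(x_{k+1},\dots,x_n)$, and attach to each address $a\in\{0,1\}^k$ a vector $w_a\in\{0,1\}^{n-k}$ (to be chosen below). I would set $f_n(x)\defeq\langle z,w_a\rangle$ where $a=(x_1,\dots,x_k)$. The parity upper bound is then immediate and independent of the $w_a$'s: a parity decision tree first reads the $k$ address bits (each a single--coordinate parity), which determines $a$ and hence $w_a$, and then queries the one parity $\langle z,w_a\rangle$ and outputs it, so $\doplus(f_n)\le k+1=O(\log n)$. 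Since trivially $D_I(f_n)\le D(f_n)\le n$, the entire content is to force $D_I(f_n)=\Omega(n)$ for a good choice of the $w_a$'s.

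For the lower bound I would use the real (Fourier) degree $\deg(g)$ as a proxy, relying on the classical fact $\deg(g)\le D(g)$ (a depth--$d$ tree is a sum of subcube indicators, each a multilinear polynomial of degree $\le d$). Degree behaves predictably under the rotations defining $D_I$: writing $F=(-1)^{f_n}$ and $F_B(x)=F(Bx)$, a change of variables gives $\widehat{F_B}(v)=\widehat{F}((B^{-1})^{\top}v)$, so $\widehat{F_B}$ is supported exactly on $\{B^{\top}u:u\in\mathrm{supp}(\widehat F)\}$ and $\deg(f_B)=\max_{u\in\mathrm{supp}(\widehat F)}\mathrm{wt}(B^{\top}u)$. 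As $B$ ranges over all invertible maps so does $M=B^{\top}$, whence
\[
 D_I(f_n)\ \ge\ \min_{B}\deg(f_B)\ =\ \min_{M}\ \max_{u\in\mathrm{supp}(\widehat F)}\ \mathrm{wt}(Mu).
\]
A direct expansion of the Fourier transform of $f_n$ shows that each vector $(0^k,w_a)$ lies in $\mathrm{supp}(\widehat F)$ (it carries the coefficient coming from the empty address--character, which does not cancel because the $w_a$ avoid the address coordinates). Hence it suffices to choose $\{w_a\}$ to be \emph{incompressible}: for every invertible $M$ on $\groupftwon$ there is some $a$ with $\mathrm{wt}\big(M(0^k,w_a)\big)\ge cn$ for a fixed constant $c>0$.

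The crux, and the step I expect to be the main obstacle, is producing such an incompressible family while keeping the number of addresses $N=2^k$ only polynomial in $n$. I would do this probabilistically. Choosing the $w_a$ independently and uniformly at random, for a fixed $M$ the image $M(0^k,w_a)$ is uniform over an $(n-k)$--dimensional subspace, so the probability it lands in the Hamming ball of radius $cn$ is at most $2^{H(c)n-(n-k)}=2^{-(1-H(c))n+k}$; the events over distinct $a$ are independent, so the probability that \emph{all} $N$ images are light is at most $2^{-N((1-H(c))n-k)}$. Taking $c$ small enough that $H(c)<1$ and union--bounding over the $\le 2^{n^2}$ invertible maps, the failure probability is at most $2^{\,n^2-N((1-H(c))n-k)}$, which tends to $0$ once $N=2^k\ge Cn$ for a suitable constant $C$; this is consistent with $k=\Theta(\log n)$. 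A non--zero success probability yields an incompressible family, hence $D_I(f_n)\ge cn$, and combined with the trivial upper bounds this gives $\doplus(f_n)=O(\log n)$ and $D_I(f_n)=\Theta(n)$. The two points to check carefully are the transpose bookkeeping in the degree identity and the counting estimate $|\mathrm{GL}_n(\groupftwo)|\le 2^{n^2}$ against the ball--volume bound.
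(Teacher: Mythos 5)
Your proposal is correct, and your construction is essentially the paper's: the paper also builds a ``parity multiplexer,'' namely a parity decision tree whose first $\log_2 n+3$ queries read single address bits and whose final query is an independently chosen random parity $s_t\in\{0,1\}^n$ attached to each of the $8n$ addresses. Where you genuinely diverge is in the lower bound on $D_I$. The paper uses no Fourier analysis: its key lemma states that if $f$ agrees with $x\mapsto\la x,s\ra$ on a coset $\{x:Ax=r\}$, then $C(f)\ge\min\{|s+v|:v\in\mathrm{rowspace}(A)\}$, proved by comparing the solvability of two linear systems; it applies this to each leaf coset of $(f_n)_B$, uses Hoeffding's inequality plus a union bound over the $2^{k+3}$ representatives $s_t'+v$ to show this minimum weight exceeds $n/4$ with high probability, and finally union-bounds over the at most $2^{n^2}$ invertible $B$. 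Your route replaces the certificate lemma by $D(g)\ge\deg(g)$ together with the exact transformation law $\mathrm{supp}(\widehat{F_B})=B^{\top}\,\mathrm{supp}(\widehat F)$, and replaces Hoeffding by the entropy bound on Hamming-ball volumes; confining each $w_a$ to the value block pins the characters $(0^k,w_a)$ inside the support (their coefficients are $2^{-k}|\{a':w_{a'}=w_a\}|>0$, so no cancellation) and spares you the minimization over the row-space coset that the paper must perform. Both arguments have comparable difficulty; the paper's version has the minor advantage of directly lower-bounding the symmetrized certificate complexity (used in its subsequent corollary), while yours is probabilistically a bit cleaner and shows in addition that even the rotation-minimized exact degree is $\Omega(n)$. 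Your two flagged checkpoints are both fine: $\widehat{F_B}(v)=\widehat F((B^{-1})^{\top}v)$ is the correct identity, and $|\mathrm{GL}_n(\groupftwo)|<2^{n^2}$ comfortably loses to the $2^{-N((1-H(c))n-k)}$ failure bound once $2^k\ge Cn$, which your choice $k=\lceil 2\log_2 n\rceil$ satisfies with room to spare.
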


Parity decision trees are closely related to the communication complexity of
XOR functions~\cite{Zhang:XOR}. Communication complexity is  a major branch of complexity theory
that studies the inherent communication cost for distributive computation.
The {\em deterministic communication complexity} of
$F:\{0, 1\}^n\times\{0, 1\}^n\rightarrow\{0, 1\}$, denoted by $\dc(F)$, is the smallest integer $k$, such that
there is a communication protocol between two parties Alice and Bob satisfying the following conditions:
(1) Alice's input is an $x\in\{0, 1\}^n$, and Bob's input is a $y\in\{0, 1\}^n$. (2) Alice and Bob take turn to send each other a message,
each message is determined by each party's input as well as the messages s/he has received previously.
(3) At the end of the protocol one party knows $F(x, y)$. (4) The total number of bits in the messages is $\le k$.
This model as well as its several variants have been extensively studied. For surveys, see \cite{Kushilevitz:1997:book, Sherstov:survey, Lee:survey}.

Determining $\dc(F)$ may be a highly nontrivial problem, even for the following class of functions of a simple structure.
A function $F:\{0, 1\}^n\times \{0, 1\}^n\rightarrow \{0, 1\}$ is called an {XOR function}~\cite{Zhang:XOR} if
for some $f:\{0, 1\}^n\rightarrow \{0, 1\}$, $F(x,y)=f(x +  y)$, for all $x, y\in\{0, 1\}^n$.
The computation of a parity decision tree $T$ for $f$ can be simulated by Alice and Bob for computing $F$:
each query $c$ is simulated by Alice and Bob computing $\la c, x\ra$ and $\la c, y\ra$, respectively, and exchange the outcomes.
\begin{proposition}
For any XOR function $F:\{0, 1\}^n\times\{0,1\}^n\to\{0, 1\}$ with $F(x, y)=f(x+y)$, $\dc(F)\le 2D_\oplus(f)$.
\end{proposition}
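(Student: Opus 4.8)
The plan is to directly simulate an optimal parity decision tree for $f$ by a communication protocol, exploiting the linearity of the inner product over $\mathbb{F}_2$. Fix a parity decision tree $T$ computing $f$ of depth $D_\oplus(f)$, and note that both Alice and Bob can know $T$ in advance, since it is a fixed algorithm and not part of the input. Each non-leaf vertex of $T$ is labeled with a query vector $c\in\{0,1\}^n$, and on an input $z$ the computation of $T$ follows the edge labeled $\la c, z\ra$.

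The key observation is that the intended input to $T$ is $z = x+y$, and that $\la c, x+y\ra = \la c, x\ra + \la c, y\ra \pmod 2$ by bilinearity of the inner product over $\mathbb{F}_2$. Hence each query answer $\la c, z\ra$ can be recovered from the single bit $\la c, x\ra$, which Alice can compute from $x$, together with the single bit $\la c, y\ra$, which Bob can compute from $y$. The protocol then proceeds as follows: starting at the root, at each non-leaf vertex with label $c$ Alice sends $\la c, x\ra$ and Bob sends $\la c, y\ra$; both parties add the two bits modulo $2$ to obtain $\la c, z\ra$ and follow the corresponding edge to the next vertex. Because both parties receive both bits and $T$ is deterministic, they remain on the same vertex throughout; upon reaching a leaf both learn its label, which equals $f(z)=f(x+y)=F(x,y)$.

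To bound the cost, observe that the root-to-leaf path has length at most the depth of $T$, i.e. at most $D_\oplus(f)$, and each vertex on the path incurs exactly two bits of communication. Thus the total number of bits exchanged is at most $2D_\oplus(f)$, which gives $\dc(F)\le 2D_\oplus(f)$. This argument is a direct one-step simulation and presents no genuine obstacle; the only points requiring care are the linearity identity $\la c, x+y\ra = \la c, x\ra + \la c, y\ra \pmod 2$ and the remark that the two parties stay synchronized, since each receives the other's bit and the determinism of $T$ then fixes the next query vector identically for both.
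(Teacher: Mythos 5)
Your simulation is correct and is exactly the argument the paper gives (in the sentence preceding the proposition): each query $c$ is answered by Alice and Bob exchanging $\la c, x\ra$ and $\la c, y\ra$ and adding them modulo $2$, costing two bits per level of the tree. Nothing further is needed.
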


In Section~\ref{sec:comm}, we show that $C^1(f)$, times $\log n$, also gives an upper bound on the {\em non-deterministic communication
complexity} of $F$. A natural question is if those upper bounds are far from being tight. While we are not able to answer this question,
we conjecture they are. We also put forward a conjecture
that, if true, would  also imply the well-known Log-Rank Conjecture~\cite{Lovasz:logrank}
when restricted to XOR functions.

\section{Parity certificate complexity}\label{sec:cert}\noindent
We consider $\{0, 1\}^n$ as a $n$-dimensional vector space over
$\mathbb{F}_2$, the two-element finite field, as well as an Abelian group with respect
to the bit-wise XOR. Then a coset of $\{0, 1\}^n$ is a set $b + V$, where $b\in\{0, 1\}^n$ and $V$ is a subspace of $\{0, 1\}^n$.
The co-dimension of $b+V$ is $n-\dim(V)$. Equivalently, a coset is the set of solutions to a system of linear equations,
and the minimum number of the equations defining the same coset is the co-dimension.
Informally, the parity certificate complexity measures how many linear constraints have to be given on the input in order to fix the value of $f$.
\begin{definition}\label{def:cert}
Let $f:D\rightarrow\{0, 1\}$ be defined on $D\subseteq\{0, 1\}^n$,
and $x\in D$. A coset $S$ of $\{0, 1\}^n$ is called a {\em parity
certificate} of $f$ on $x$ if $s\in S$ and $f$ is constant on $S\cap
D$. The {\em size} of the certificate is defined to be the
co-dimension of $S$.   The minimum size of a parity certificate for
$x$ is denoted by $C_\oplus(f, x)$. The {\em parity certificate
complexity of $f$}, denoted by $C_\oplus(f)$, is $\max_{x}
C_\oplus(f, x)$.

A parity certificate $S$ is called a $0$- (or $1$-) parity certificate if $f(x)=0$ (or $f(x)=1$, respectively) for all $x\in S\cap D$.
The  {\em $0$- and $1$-parity certificate complexities of $f$} are
$C_\oplus^0(f)\defeq \max_{x:f(x)=0} C_\oplus(f, x)$, and $C_\oplus^1(f)\defeq \max_{x:f(x)=1}C_\oplus(f, x)$, respectively.
\end{definition}
If $f\equiv 0$ (or $f\equiv1$), then $C_\oplus^1(f)$ (or $C_\oplus^0(f)$, respectively) is not defined.
We may represent a parity certificate $S$ of size $T$ (or a coset $S$ of co-dimension $T$) by a pair $(C, r)$, where $C\in\{0, 1\}^{T\times n}$ and $r\in\{0, 1\}^T$,
such that $S=\{x: Cx=r\}$. It follows from the definitions that when $B\in\{0, 1\}^{n\times n}$ takes value
from invertible matrices,
\begin{equation}\label{eqn:CoplusC}
C_\oplus(f,x) = \min_B C(f_B, B^{-1}x).
\end{equation}
Similar relations between the $0$- and $1$-parity certificates/certificates also hold.
Note that $0$- and $1$-parity certificate complexity measure the non-deterministic parity decision tree complexity of
$f$ and $1-f$, respectively, with the non-deterministic parity decision tree complexity defined in analogy to the non-deterministic decision tree complexity.
Since any parity decision tree gives a certificate of size no more than the depth of the tree for any input, we have the following relation.
\begin{proposition} For any Boolean function $f$,
  $C_\oplus(f)\leq D_\oplus(f)$.
\end{proposition}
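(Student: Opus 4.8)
The plan is to read off a parity certificate directly from an optimal parity decision tree, following the remark preceding the statement. Fix a parity decision tree $T$ of depth $D_\oplus(f)$ that computes $f$, and let $x\in\{0,1\}^n$ be an arbitrary input. First I would trace the computation of $T$ on $x$: starting at the root and following, at each non-leaf vertex labeled $c$, the edge indexed by $\la c, x\ra$. This path visits some sequence of vertices labeled $c_1, c_2, \ldots, c_k$, producing answers $b_j \defeq \la c_j, x\ra$, and terminates at a leaf whose label is $f(x)$. Since the path has length at most the depth of $T$, we have $k \le D_\oplus(f)$.

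Next I would collect the query–answer pairs along this path into the linear system $\la c_j, \cdot\ra = b_j$ for $j = 1, \ldots, k$, and let
\begin{equation}
S \defeq \{x' \in \{0,1\}^n : \la c_j, x'\ra = b_j \text{ for all } 1\le j\le k\}.
\end{equation}
Two observations drive the argument. First, $S$ is a coset of $\{0,1\}^n$: it is the solution set of a consistent system of linear equations over $\mathbb{F}_2$ (consistent because $x$ itself satisfies every equation), so it is a shift of the subspace $\{x': \la c_j, x'\ra = 0\ \forall j\}$, and its co-dimension equals the rank of the matrix with rows $c_1, \ldots, c_k$, which is at most $k \le D_\oplus(f)$. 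Second, $f$ is constant on $S$: any $x' \in S$ gives the same query answers as $x$ at each vertex, so the computation of $T$ on $x'$ follows exactly the same root-to-leaf path and reaches the same leaf, whence $f(x') = f(x)$.

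Putting these together, $S$ is a parity certificate of $f$ on $x$ of size at most $D_\oplus(f)$, so $C_\oplus(f, x) \le D_\oplus(f)$. Since $x$ was arbitrary, taking the maximum over all inputs gives $C_\oplus(f) = \max_x C_\oplus(f, x) \le D_\oplus(f)$, as claimed. This argument is essentially routine; there is no real obstacle, the only point meriting care is the verification that the co-dimension of $S$ is bounded by the number of queries (rather than merely by $n$), which follows because redundant or linearly dependent queries can only lower the rank, and that the path reaching a single leaf forces a constant output on all of $S$.
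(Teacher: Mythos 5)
Your proposal is correct and is exactly the argument the paper intends: the paper dispenses with this proposition in one sentence ("any parity decision tree gives a certificate of size no more than the depth of the tree for any input"), and your write-up simply fills in the details of that observation. Nothing further is needed.
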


We now prove Theorem~\ref{thm:dc}, which states that $D_\oplus(f)\le C^0_\oplus(f)C^1_\oplus(f)$, for any $f$.
\begin{proofof}{Theorem~\ref{thm:dc}}
The idea of the proof is similar to that in~\cite{Beals:2001:QLB} for proving Inequality~(\ref{eqn:dcequiv}).
We give an algorithm that computes $f$ using no more than
$C^1_\oplus(f)C^0_\oplus(f)$ queries.

Fix an input $x_0$. For a sequence of cosets $(C_1, r_1)$, $(C_2, r_2)$, ..., define
$V_i\defeq \{x: C_jx=C_jx_0, j=1, 2, ..., i\}$ for $i\ge1$ and $V_0\defeq\{0, 1\}^n$.
By definition, $V_0\supseteq V_1\supseteq V_2 \supseteq\cdots$.
The algorithm will examine a sequence of $1$-parity certificates, $(C_1, r_1)$, $(C_2, r_2)$,
..., that it constructs incrementally from an initially empty sequence.
It proceeds as follows:
For $i=1, 2, ...$, if $f|_{V_{i-1}}$ is constant, output that constant and terminate.
Otherwise, extend the current sequence of $1$-parity certificates with a new one $(C_i, r_i)$ for $f|_{V_{i-1}}$
of the smallest size. Since $f|_{V_{i-1}}$ is not constant, such a $1$-parity certificate exists.
Query the rows in $C_i$. If the answers agree with $r_i$, return $1$. Otherwise continue with $i$ incremented
by $1$.

The algorithm clearly outputs the correct answer. Since restricting a function on a subset does not increase $C_\oplus^1$,
at most $C_\oplus^1(f)$ queries are made in the $i$th iteration, for each $i$.
We prove that $f|_{V_T}$ is constant for some $T\le C^0_\oplus(f)$.
Assume otherwise and fix an $x'_0\in V_T$ with $T=C^0_\oplus(f)$ and $f(x'_0)=0$. We argue that for each $i$, $1\le i\le T$,
\begin{equation}\label{eqn:decrease}
C_\oplus(f|_{V_{i}}, x'_0)\le C_\oplus(f|_{V_{i-1}}, x'_0)-1.
\end{equation}
Fix a parity certificate $(C, r)$ for $f|_{V_{i-1}}$ containing $x'_0$ and of the smallest size. Since the linear system
$\{C_i x= r_i, Cx=r\}$ does not have a solution  in $V_{i-1}$ but the system $\{C_i x= r_i\}$ does
(by the definition of $(C_i, r_i)$ being a $1$-parity certificate for $f|_{V_{i-1}}$, which is non-constant),
the row space of $C$ has a non-empty intersection
with the space spanned by the rows of $C_1, ..., C_i$. Assume without loss of generality that the intersection
is spanned by the first $k$ rows, for some $k\ge1$, in $C$ (otherwise, apply an appropriate invertible matrix on both sides of $Cx=r$),
and denote the sub-matrix of $C$ and $r$ containing those rows by $C'$ and $r'$, and the remaining portions by $C''$ and $r''$.
Any $x\in V_i$ satisfying $C''x=r''$  must have $C'x=C'x_0=C'x'_0=r'$, thus $Cx=r$, implying $f(x)=0$.
Thus $(C'', r'')$ is a parity certificate containing $x'_0$ for $f|_{V_i}$, and
Eqn.~(\ref{eqn:decrease}) holds. Consequently, $C_\oplus(f, x'_0) \ge T+ C_\oplus(f|_{V_T}, x'_0)\ge T+1
> C_\oplus^0(f)$, a contradiction.
Therefore $f|_{V_T}$ is constant for some $T\le C^1_\oplus(f)$, and the algorithm uses no more than $C_\oplus^1(f)C_\oplus^0(f)$
number of queries.
\end{proofof}

\section{Parity block sensitivity}
\label{sec:bs}
\noindent
Recall that the {\em block sensitivity of $f$ on an input $x$}, $\bs(f, x)$, is the smallest
integer $k$, such that there exist $S_1, S_2, ..., S_k\subseteq\{1, 2, ..., n\}$ that are pair-wise disjoint,
and for each $i$, $1\le i\le k$, $f(x) \ne f(x^{S_i})$, where $x^{S_i}\in\{0, 1\}^n$ is obtained from $x$ by flipping
each bit indexed by $S_i$. The {\em block sensitivity of $f$}, $\bs(f)$, is $\max_x \bs(f, x)$.
We define the parity analogues of those concepts. First define weak parity block sensitivity $\wbs(f,x)$ similar to the definition of parity certificate complexity.

\begin{definition}
The {\em weak parity block sensitivity of $f$ on $x$} is
\[\wbsoplus(f,x)\defeq\min_B\  \bs(f_B,B^{-1}x).\]
The {\em weak parity block sensitivity of $f$} is
\[\wbsoplus(f)\defeq\max_x\ \wbsoplus (f,x).\]
\end{definition}

Note that $\wbsoplus(f)$ is invariant under shift and rotation, so we can extend it to  functions defined on a coset
through Eqn.~(\ref{eqn:extend}).
The following example shows that $\wbsoplus(f)$ is not monotone.

\begin{example}
 Consider $f(x_1,x_2,x_3)=x_1\oplus(x_2\vee x_3)$. For any input $x$, we can always
  choose a basis $\{e_1,e_2,e_3\}$ such that $f(x+e_i)=f(x)$, $i=1,2,3$. For example, when $x=011$ we can choose the
  basis  $\{010,001,111\}$. For such bases, any sensitive block contains at least two base vectors.
  So there is at most one sensitive block, implying $\wbsoplus(f,x)\leq 1$. But with $H=\{x:x_1=0\}$,
  $f|_H(x_2,x_3)= x_2\vee x_3$.
  This is the OR function on two variables, of which the parity block sensitivity is 2 at 0.
  Thus for this $f$, $\wbsoplus(f) < \wbsoplus(f|_H)$.
\end{example}

We modify $\wbsoplus$ to a parity complexity measure
by taking maximum over all restrictions to cosets. Then it will be invariant under shift and rotation, and is monotone.

\begin{definition}\label{def:bs} For a Boolean function $f:\{0, 1\}^n\rightarrow\{0, 1\}$,
its {\em parity block sensitivity}, $bs_\oplus(f)$, is
\[\bs_\oplus(f)=\max_H\  \wbsoplus(f|_H),\]
where $H$ takes value from the cosets of $\{0, 1\}^n$.
\end{definition}
Similar to Inequality~(\ref{eqn:bsequiv}), Theorem~\ref{thm:bs} implies that
the parity block sensitivity is polynomially related to parity certificate complexity.
We give below the proof for the Theorem, which states that $\bs_\oplus(f)\le C_\oplus(f)\le \bs_\oplus^2(f)$ for any $f$.
The proof idea is also similar to that for proving~(\ref{eqn:bsequiv}) in~\cite{Nisan:CREW}.
\begin{proofof}{Theorem~\ref{thm:bs}}
Since $C_\oplus$ is monotone,  to prove $\bs_\oplus(f)\leq C_\oplus(f)$, it suffices to prove $\wbsoplus(f,x)\leq
C_\oplus(f)$, for any $x$. This follows straightforwardly from the definition, the relation between block sensitivity
and certificate complexity, and Eqn.~(\ref{eqn:CoplusC}):
\[\wbsoplus(f,x)=\min_B \bs(f_B, B^{-1}x) \leq \min_B C(f_B,
B^{-1}x)= C_\oplus(f,x).\]

We prove the second inequality by showing  $C_\oplus(f)\leq
\wbsoplus(f)\bs_\oplus(f)$. Since the three quantities are both invariant under shift, we assume
without loss of generality that $C_\oplus(f)$ is achieved at $x=0$. Also assume without loss of generality
that $f(0)=0$. Since $C_\oplus(f, x) = C_\oplus(f_B, B^{-1}x)$ for any invertible $B$ and any $x$,
we can further assume without loss of generality that $b\defeq\wbsoplus(f, 0)=\bs(f, 0)$.
Let $S_1$, $S_2$, ..., $S_b\subseteq\{1, 2, ..., n\}$
be a collection of disjoint and minimal sets achieving $\bs(f, 0)$.
Consider $S=\{x: x_i=0, i\in S_1\cup S_2\cup \cdots S_b \}$.
Then $S$ is a parity certificate for $f$, as otherwise there would be a block $S'\subseteq
\left(\{1, ..., S\}-{\bigcup_{i=1}^bS_i}\right)$
such that $f(0^{S'})=1$, contradicting that $b=\bs(f, 0)$.

Fix an $i$, $1\le i\le b$. Let $m=|S_i|$ and  $S_i=\{a_1,a_2, ..., a_m\}$.
Consider $f|_{H_i}$, where $H_i\defeq\{x: x_j=0, j\in \{1, 2, ..., n\}- S_i\}$.
Then $f|_{H_i}:\{0, 1\}^m\rightarrow\{0, 1\}$ and
\[ f|_{H_i}(y) = f\left(\sum_{i=1}^m y_i e_{a_i}\right),\quad\textrm{for all $y\in\{0, 1\}^m$}.\]
Since $S_i$ is minimal, for any $S'_i\subseteq S_i$, $f(0^{S'_i})=1$ if and only if $S_i'=S_i$.
Thus $f|_{H_i}(y)$ is the AND function on $m$ variables.
Therefore $\wbsoplus(f|_{H_i})=m$. Consequently, $m\le \bsoplus(f)$.
Thus $C_\oplus(f)=C_\oplus(f, 0)\le \sum_{i=1}^b |S_i|\le \wbsoplus(f,0)\bsoplus(f)$, implying
$C_\oplus(f)\le \wbsoplus(f)\bsoplus(f)$.
\end{proofof}

\section{The gap between parity measures and symmetrized classical measures}\noindent
In this section, we prove Theorem~\ref{thm:exp},
which states that for infinitely many $n$, there exists $f_n:\{0, 1\}^n\rightarrow\{0, 1\}$, such that $D_\oplus(f_n) = O(\log n)$ and   $D_I(f_n)=\Theta(n)$.
 We will define the desired function $f_n$
by a random parity decision tree of logarithmic depth, then show that there exists such a parity decision tree
of which the function requires linear certificate complexity, thus linear decision tree complexity.

For $A\in\{0, 1\}^{m\times n}$, $s\in\{0, 1\}^n$, define
\[\tau_A(s)\defeq\min\{|s+v| :\textrm{$v\in$ row space of $A$}\}.\]
We will need the following lemma to lower bound the certificate complexity.
\begin{lemma}\label{lm:exp}
Let $f:\{0, 1\}^n\rightarrow\{0, 1\}$, $s\in\{0, 1\}^n$ and
$f(x)=\langle x, s\rangle$ for all $x$ in a coset $H=(A, r)$. Then
$C(f)\geq \tau_A(s)$. In particular, $D(f)\ge \tau_A(s)$.
 \end{lemma}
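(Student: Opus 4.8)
The plan is to prove the stronger statement that $C(f,x_0)\ge\tau_A(s)$ holds for \emph{every} $x_0\in H$, from which $C(f)\ge\tau_A(s)$ and then $D(f)\ge C(f)\ge\tau_A(s)$ follow immediately (the last inequality was recorded in the introduction). The entire argument is a translation of the combinatorial certificate condition into a statement about orthogonal complements over $\mathbb{F}_2$, so no estimation is really needed — the content is linear-algebraic duality.

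First I would fix an arbitrary $x_0\in H$ together with a certificate $S\subseteq\{1,\dots,n\}$ for $f$ at $x_0$, of size $t=|S|$. By definition every input agreeing with $x_0$ on the coordinates in $S$ must share the value $f(x_0)$; restricting to those inputs that additionally lie in $H$, the set $W=\{y\in H:\ y_i=(x_0)_i\text{ for all }i\in S\}$ is forced to be $f$-monochromatic. Writing $H=\{x:Ax=r\}$ as a coset of $\ker A$ and adjoining the constraints $y_i=(x_0)_i$, I would identify $W$ as the affine subspace $x_0+U$ with common direction space $U=\ker A\cap\{v:\ v_i=0\text{ for all }i\in S\}$.

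Next, since $f(y)=\langle y,s\rangle$ on $H\supseteq W$ and $f$ is constant on $W=x_0+U$, the map $v\mapsto\langle x_0+v,\,s\rangle$ is constant on $U$; as this map differs from $\langle v,s\rangle$ only by the constant $\langle x_0,s\rangle$, constancy is equivalent to $\langle v,s\rangle=0$ for all $v\in U$, i.e.\ $s\in U^\perp$. The crux is then the duality computation $U^\perp=(\ker A)^\perp+\{v:\ v_i=0,\ i\in S\}^\perp$, using the identity $(V\cap W)^\perp=V^\perp+W^\perp$, valid because the standard bilinear form on $\mathbb{F}_2^n$ is non-degenerate (so $(U^\perp)^\perp=U$). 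Since $(\ker A)^\perp$ is exactly the row space of $A$ and $\{v:\ v_i=0,\ i\in S\}^\perp=\mathrm{span}\{e_i:i\in S\}$, this gives $s\in(\text{row space of }A)+\mathrm{span}\{e_i:i\in S\}$.

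Finally I would read off the conclusion: the membership above means $s=v+w$ with $v$ in the row space of $A$ and $w$ supported on $S$, so $s+v=w$ (recall $-v=v$ over $\mathbb{F}_2$) has Hamming weight at most $|S|=t$. Hence $\tau_A(s)\le|s+v|\le t$, i.e.\ $t\ge\tau_A(s)$. As $S$ and $x_0$ were arbitrary, $C(f,x_0)\ge\tau_A(s)$ for all $x_0$, which proves the lemma. I expect the only genuine subtlety to be this duality step: one must verify that the affine shift by $x_0$ is harmless (it only contributes a constant, reducing monochromaticity to the clean condition $s\perp U$) and that the complement identity is invoked with the correct non-degenerate form over $\mathbb{F}_2$, after which the passage from $s\in U^\perp$ to a low-weight representative of $s$ modulo the row space is automatic.
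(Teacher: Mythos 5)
Your proof is correct and follows essentially the same route as the paper's: both arguments reduce the certificate condition to the observation that $\langle s,\cdot\rangle$ is constant on the affine subspace cut out by the certificate equations together with $Ay=r$, and conclude by linear-algebraic duality that $s$ lies in the row space of $A$ plus the span of the certificate coordinates, yielding a representative $s+v$ of Hamming weight at most the certificate size. The paper packages this last step as the infeasibility of one of two explicit linear systems rather than via the identity $(V\cap W)^\perp=V^\perp+W^\perp$, but the content is identical.
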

\begin{proof}
Choose an arbitrary $x_0\in H$. Let $\ell\defeq C(f, x_0)\le C(f)$.
Suppose that $E\in\{0, 1\}^{\ell \times n}$ describes a certificate.
That is, each row in $E$ contains all $0$ but a single $1$, and all
$x'$ with $Ex'=Ex_0$ must have $f(x')=f(x_0)$.

Now consider two sets of equations on the unknown $y\in\{0, 1\}^n$:
\begin{equation*}
\left\{\begin{array}{rcl}
Ey &=& Ex_0\\
Ay &=& r\\
\la s, y\ra &=&\la s, x_0\ra
\end{array}\right.
\quad\textrm{and}\quad
\left\{\begin{array}{rcl}
Ey &=& Ex_0\\
Ay &=& r\\
\la s, y\ra &=& 1-\la s, x_0\ra
\end{array}\right. \ .
\end{equation*}

The first set of equations has a solution (e.g. $y=x_0$) but not the
second set, since all $y$ satisfying $Ay=r$ must have $\la s,
y\rangle=\langle s, x_0\rangle$. This is possible only when $s$ is
in the span of the rows in $E$ and in $A$. Thus for some $v$ in the
row space of $A$, $s+v$ is in the row space of $E$. Thus
$\tau_A(s)\le \ell$. Therefore, $\tau_A(s)\le C(f)$. That $D(f)\geq
\tau_A(s)$ follows from the fact that $C(f)\le D(f)$.
\end{proof}

We are ready to prove Theorem~\ref{thm:exp}.
\begin{proofof}{Theorem~\ref{thm:exp}}
Let $n=2^k$. We construct a function $f$ with $n$ variables decided
by a parity decision tree $T$ of depth $k+4$.
For $1 \leq i\leq k+3$, all the $i$-th layer
 nodes are labeled by $e_i\defeq0^{i-1}10^{n-i}$.
 The $t$-th node of the last layer before the output, $1\le t\le 8n$, is labeled by  a random
$s_t\in\{0,1\}^{n}$. The answer to this query $\langle x, s_t\rangle$ is the output.

Fix an invertible matrix $B$. Then  $f_B$ is computed by the parity tree that replaces each query $c$ in $T$ by $B^Tc$.
In this parity decision tree, the inputs that arrive at a node with query $s'_t\defeq B^Ts_t$ form a coset  $H_t=(C_t, r_t)$ of co-dimension $k+3$,
and $f_B(x)= \la x, s'_t\rangle$ for all $x\in H_t$.
By Lemma~\ref{lm:exp},
$D(f_B)\geq \tau_{C_t}(s'_t)$.

For each $v$ in the row space of $C_t$,  $s'_t+v$ is uniformly distributed. Thus by Hoeffding's Inequality,
$\Pr(|s'_t+v|\leq n/4)\leq e^{-n/8}$.
Thus
$$\Pr(\tau_{C_t}(s'_t)\leq n/4)\leq 2^{k+3}e^{-n/8}=8ne^{-n/8}.$$
There are $8n$ independently chosen $s_j$, thus
$$\Pr(D(f_B)\geq n/4)\geq 1-\left(8ne^{-n/8}\right)^{8n}=1-(8n)^{8n}e^{-n^2}.$$
There are at most $(2^n)^n=2^{n^2}$ different transformations $B$(the exact number  is $\Pi_{i=0}^{n-1}(2^{n-i}-1)$). Therefore,
$$P(\min_BD(f_B)\geq n/4)\geq 1-(8n)^{8n}e^{-n^2}\cdot2^{n^2}=1-(8n)^{8n}\left(\frac{2}{e}\right)^{n^2} \rightarrow 1.$$
This implies that when $n$ is large enough, almost all the functions $f$ computed by the above
parity trees have $D_I(f)=\min_BD(f_B)\geq n/4$. In contrast, the parity
decision tree complexity of these $f$ is no more than $k+4=\log_2n+4$.
\end{proofof}

The following corollary follows from the polynomial relations among
certificate complexity and block sensitivity with decision tree
complexity and their analogy for parity
complexities.\begin{corollary} For infinitely many $n$, there exists
a $n$-variate $f_n$ such that the gaps between $C_\oplus(f)$ and
$C_I(f)$ and between $\bs_\oplus(f)$ and $\bs_I(f)$ are exponential.
\end{corollary}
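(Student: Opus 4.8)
The plan is to derive the corollary directly from Theorem~\ref{thm:exp} by propagating its two bounds—$D_\oplus(f_n)=O(\log n)$ on the parity side and $D_I(f_n)=\Theta(n)$ on the classical side—through the polynomial equivalences already established, treating each side separately. Since $f_n$ is furnished by Theorem~\ref{thm:exp}, no new construction is needed; the work is entirely in chaining inequalities in the right direction.

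On the parity side I would first push $D_\oplus$ down to the two smaller measures. The Proposition of Section~\ref{sec:cert} gives $C_\oplus(f_n)\le D_\oplus(f_n)=O(\log n)$, and the first inequality of Theorem~\ref{thm:bs} gives $\bs_\oplus(f_n)\le C_\oplus(f_n)=O(\log n)$. Hence both parity measures are only logarithmic in $n$.

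On the classical symmetrized side I would transfer the linear lower bound on $D_I$ to lower bounds on $C_I$ and $\bs_I$. The key observation is that the inequalities $D(g)\le C^1(g)C^0(g)\le C(g)^2$ (Inequality~(\ref{eqn:dcequiv})) and $C(g)\le \bs^2(g)$ (Inequality~(\ref{eqn:bsequiv})) hold \emph{pointwise} for every Boolean function $g$, hence in particular for every rotation $f_B$. Let $B^*$ be an invertible transformation achieving $C_I(f_n)=\min_B C(f_B)=C(f_{B^*})$. Then $D_I(f_n)=\min_B D(f_B)\le D(f_{B^*})\le C(f_{B^*})^2=C_I(f_n)^2$, so $C_I(f_n)\ge\sqrt{D_I(f_n)}=\Omega(\sqrt n)$. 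Applying the same argument with a minimizer $B^{**}$ of $\bs(f_B)$ and Inequality~(\ref{eqn:bsequiv}) gives $C_I(f_n)\le \bs^2(f_{B^{**}})=\bs_I(f_n)^2$, hence $\bs_I(f_n)\ge\sqrt{C_I(f_n)}=\Omega(n^{1/4})$.

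Combining the two sides yields $C_I(f_n)=\Omega(\sqrt n)$ while $C_\oplus(f_n)=O(\log n)$, and $\bs_I(f_n)=\Omega(n^{1/4})$ while $\bs_\oplus(f_n)=O(\log n)$. Writing the smaller measure in each pair as $m$, the bound $m=O(\log n)$ gives $n=2^{\Omega(m)}$, so the larger measure is $2^{\Omega(m)}$ and each gap is exponential. The only delicate point is the order of the quantifiers in the symmetrized measures: because $\Theta_I$ is a single minimization over $B$, I must apply each pointwise polynomial relation at the specific minimizer for the \emph{target} measure and then bound $D_I$ (or $C_I$) by its value at that same $B$, which is exactly what converts a $\min_B$ into the desired one-sided inequality. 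I expect this bookkeeping, rather than any substantive new idea, to be the main thing to get right, the remainder being a direct concatenation of inequalities already proved.
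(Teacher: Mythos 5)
Your proposal is correct and follows exactly the route the paper intends: the paper offers no written proof beyond the remark that the corollary ``follows from the polynomial relations among certificate complexity and block sensitivity with decision tree complexity and their analogy for parity complexities,'' and your chain of inequalities (including the careful handling of the minimizer $B^*$ when converting $\min_B$ into a one-sided bound) is precisely the intended argument spelled out.
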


\section{Connection with communication complexities}\noindent
\label{sec:comm}
In a {\em non-deterministic communication protocol} for computing $F:\{0, 1\}^n\times \{0, 1\}^n\rightarrow\{0, 1\}$,
Alice or Bob may non-deterministically choose from a set of strategies for the rest of the communication.
We say that the protocol computes $F$ if for any $(x, y)$, $F(x, y)=1$ if and only if for some choice in the non-deterministic steps
the protocol outputs $1$. Denote  the {\em non-deterministic communication complexity} of $F$
by $N^1(F)$. A fundamental result by Aho, Ullman and Yannakakis~\cite{AUY} is $\dc(F) = O(N^1(F)N^1(1-F))$, a
relation similar to those about decision tree complexity and parity decision tree complexity.
The main result of this section relates $N^1(F)$ with $C^1_\oplus(f)$ for XOR functions $F$ with $F(x, y)=f(x+y)$.

\begin{theorem} \label{nc}For any XOR function $F(x,y)=f(x\oplus y)$,
 $N^1(F)\leq \parityc^1(f)\log n$.\end{theorem}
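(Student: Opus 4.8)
The plan is to upper bound $N^1(F)$ by building an explicit non-deterministic protocol, equivalently a small cover of $F^{-1}(1)$ by $1$-monochromatic rectangles. The key structural observation is that a $1$-parity certificate of $f$ is a coset $H=\{w:Cw=r\}$ of co-dimension $t\le\parityc^1(f)$ on which $f\equiv 1$, and that such a coset carves out an all-ones region of the communication matrix of $F$: writing $z=x\oplus y$, the equation $Cz=r$ reads $Cx=Cy\oplus r$, which forces $F(x,y)=f(z)=1$. For each fixed $u\in\{0,1\}^t$ the set $\{x:Cx=u\}\times\{y:Cy=u\oplus r\}$ is a genuine rectangle, so a single fully-$1$ coset of co-dimension $t$ decomposes into $2^{t}\le 2^{\parityc^1(f)}$ all-ones rectangles.

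Now suppose $\mathcal H$ is a family of fully-$1$ cosets with $\bigcup\mathcal H=f^{-1}(1)$. Since both players know $f$, they know $\mathcal H$, so the protocol is simply: guess (non-deterministically) the index of some $(C_i,r_i)\in\mathcal H$, let Alice send $C_ix$ (at most $\parityc^1(f)$ bits), and let Bob accept iff $C_iy=C_ix\oplus r_i$. Some branch accepts exactly when $x\oplus y$ lies in one of the cosets, i.e.\ exactly when $f(x\oplus y)=1$, at cost $\log|\mathcal H|+\parityc^1(f)$; equivalently the $1$-rectangle cover number satisfies $\chi^1(F)\le 2^{\parityc^1(f)}|\mathcal H|$. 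Thus the whole statement reduces to the purely combinatorial claim that $f^{-1}(1)$ admits a cover by at most $n^{O(\parityc^1(f))}$ fully-$1$ cosets, for then $\log|\mathcal H|\le\parityc^1(f)\log n$ up to lower-order terms.

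To see that this is plausible and to fix the template, I would first treat restricted certificates and the case $\parityc^1(f)=1$. If one only allows axis-aligned cosets, i.e.\ ordinary $1$-certificates fixing at most $C^1(f)$ coordinates to prescribed values, there are at most $\binom{n}{\le C^1(f)}2^{C^1(f)}=n^{O(C^1(f))}$ of them, which already yields the weaker bound $N^1(F)\le C^1(f)\log n$ mentioned in the introduction. The point of the parity version is to exploit arbitrary linear certificates, for which the case $\parityc^1(f)=1$ is illuminating: the linear forms $a$ that are constant on $f^{-1}(0)$ are exactly those whose associated hyperplane is fully-$1$, and coverability of $f^{-1}(1)$ by such hyperplanes forces $f^{-1}(0)$ itself to be a single coset. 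A spanning set of the annihilator of its direction space then gives at most $n$ fully-$1$ hyperplanes covering $f^{-1}(1)$, so $|\mathcal H|\le n$ as required.

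The main obstacle is the general step $\parityc^1(f)>1$, where $f^{-1}(0)$ need no longer be a coset and the clean duality above breaks down. Here I would try to peel off one linear constraint at a time and induct on $\parityc^1(f)$: restrict $f$ to a fully-$1$ coset of co-dimension one and charge the cosets produced to a matroid-like spanning structure on the linear forms appearing in the certificates, aiming to keep the total count at $n^{O(\parityc^1(f))}$ rather than the trivial exponential bound coming from the roughly $2^{\parityc^1(f)\cdot n}$ available cosets. Controlling this count---establishing that although there can be exponentially many fully-$1$ cosets, only polynomially-in-$n^{\parityc^1(f)}$ many are ever needed to cover $f^{-1}(1)$---is exactly the crux; the rest is the routine protocol bookkeeping sketched above.
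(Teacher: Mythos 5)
Your protocol and the reduction are exactly right, and they match the paper's: non-deterministically guess a member $(C_i,r_i)$ of a cover of $f^{-1}(1)$ by fully-$1$ cosets, have Alice send $C_i x$, and have Bob check $C_i y = C_i x + r_i$, at cost $\log|\mathcal{H}| + C^1_\oplus(f)$. You also correctly isolate the crux: one must show that some such cover has size $n^{O(C^1_\oplus(f))}$, even though the number of available fully-$1$ cosets can be exponential. But you do not prove this, and it is the actual content of the theorem (Lemma~\ref{lm:cover} in the paper). Your base cases do not carry the load: the axis-aligned count is trivial, and the $d=1$ duality (where $f^{-1}(0)$ is forced to be a single coset and a spanning set of $\le n$ hyperplanes suffices) collapses completely once $d\ge 2$, as you note. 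The proposed induction --- peeling off one linear constraint and restricting to a co-dimension-one coset --- has no clear induction parameter: restricting $f$ to a coset $K$ gives $C^1_\oplus(f|_K)\le C^1_\oplus(f)$ but not $C^1_\oplus(f)-1$, so the recursion does not terminate with a polynomial count, and the ``matroid-like spanning structure'' is not specified. As written, the argument establishes the theorem only for $d=1$.

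For comparison, the paper's bound on the cover size goes through a genuinely different, non-inductive mechanism. One first passes to an \emph{essential} cover by certificates all of size exactly $d$, in which no member is contained in the union of the others. A counting-in-two-ways step reduces the problem to bounding, for a fixed point $x$, the number $k$ of cosets of the cover through $x$; translating by $x$ these become subspaces $V_i=\{w: C_i w=0\}$, none contained in the union of the rest, with witnesses $x_i\in V_i\setminus\bigcup_{j\ne i}V_j$. The matrix $G=[C_i x_j]$ factors through $\{0,1\}^n$ and so has $\mathbb{F}_2$-rank at most $n$; taking the entry-wise OR of its $d$ coordinate slices multiplies the rank by at most $3n$ per slice (via $A\vee B=A+B+A\odot B$), yet the resulting matrix is $J-I$, whose $\mathbb{F}_2$-rank is at least $k-1$. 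Hence $k\le(3n)^d$. Some argument of this kind --- or another proof of the $n^{O(d)}$ cover bound --- is what your writeup is missing.
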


To prove this result, we will make use of the following notion.
\begin{definition}
A set $\mathcalc$ of $1$-parity certificates for $f$ is called {\em essential} if (1)
for any $x$ with $f(x)=1$ there is an element in $\mathcalc$ containing $x$,  (2) no element is a subset of the union
of all the other elements, and (3) any element is of a size $C^1_\oplus(f)$.
\end{definition}

Clearly there exists an essential set of $1$-parity certificates, as one could start with
one smallest $1$-parity certificate for each $x$, increase its size to $C^1_\oplus(f)$ if necessary,
and remove any element contained in the union of the rest of the set.

\begin{proofof}{Theorem~\ref{nc}} Let $d=\parityc^1(f)$. Fix an essential set $\mathcalc=\{(C_i, r_i): 1\le i\le K\}$ of $1$-parity certificates.
The following is a simple non-deterministic communication protocol for $F$.
bits of communication: Alice non-deterministically chooses $(C_i, r_i)\in\mathcalc$,
sends $i$, as well as $C_ix$. Bob checks if $C_ix + C_iy = r_i$. He accepts if yes, rejects otherwise.
The correctness of the protocol follows from the definition of $1$-parity certificate and the assumption that
$\mathcalc$ contains a $1$-parity certificate for any $1$-input. The total cost is $d +\lceil \log_2( K+1)\rceil$.
Lemma~\ref{lm:cover} below shows that  $K=n^{O(d)}$.
Thus $N^1(F)=O(d\log n)$.
\end{proofof}

\begin{lemma}\label{lm:cover}
Let $\mathcalc$ be an essential set of $1$-parity certificates
for $f:\{0, 1\}^n\rightarrow\{0, 1\}$ and $d=C^1_\oplus(f)$. Then
$|\mathcalc|\leq n^{O(d)}$.
\end{lemma}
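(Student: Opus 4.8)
The plan is to turn the combinatorial structure of an essential family into a rank inequality over $\mathbb{F}_2$. A naive count is hopeless here: there are roughly $2^{\Theta(dn)}$ cosets of co-dimension $d$ in $\{0,1\}^n$, so the bound $n^{O(d)}$ cannot come from counting cosets and must genuinely use the ``essential'' hypothesis. The only feature I would use is condition (2): since no element of $\mathcal{C}=\{(C_i,r_i):1\le i\le K\}$ lies in the union of the others, each certificate $S_i$ has a \emph{private point} $p_i$, i.e.\ a point with $p_i\in S_i$ but $p_i\notin S_j$ for all $j\neq i$. Writing $\tilde p_i=(p_i,1)\in\mathbb{F}_2^{n+1}$ and $\tilde C_j=[\,C_j\mid r_j\,]$, membership $p_i\in S_j$ becomes the vanishing of $\tilde C_j\tilde p_i=C_jp_i+r_j\in\mathbb{F}_2^{d}$ (here $d$ is the number of rows, using that each certificate has size $\le d$).

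Next I would introduce the $K\times K$ matrix $N$ over $\mathbb{F}_2$ defined by $N_{ij}=1$ iff $p_i\notin S_j$. The private-point property says precisely that $N$ has zero diagonal and all off-diagonal entries equal to $1$, i.e.\ $N=J-I$. A one-line kernel computation — a vector $x$ with $Nx=0$ satisfies $\sum_{j\neq i}x_j=0$ in every row, forcing every coordinate of $x$ to equal $\sum_k x_k$ — shows that the kernel has dimension at most $1$, so $\operatorname{rank}_{\mathbb{F}_2}(N)\ge K-1$. It then suffices to prove the matching upper bound $\operatorname{rank}_{\mathbb{F}_2}(N)\le n^{O(d)}$.

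For that I would linearize the entries of $N$ over $\mathbb{F}_2$. Since $p_i\notin S_j$ iff at least one coordinate of $\tilde C_j\tilde p_i$ is nonzero, setting $b_t=\langle \tilde c_{j,t},\tilde p_i\rangle$ (with $\tilde c_{j,t}$ the $t$-th row of $\tilde C_j$) and using the $\mathbb{F}_2$ identity $\mathrm{OR}(b_1,\dots,b_d)=\sum_{\emptyset\neq T\subseteq[d]}\prod_{t\in T}b_t$, each entry becomes $N_{ij}=\sum_{\emptyset\neq T}\prod_{t\in T}\langle\tilde c_{j,t},\tilde p_i\rangle$. Expanding each inner product over the $n+1$ coordinates rewrites $\prod_{t\in T}\langle\tilde c_{j,t},\tilde p_i\rangle$ as a sum of products $\big(\prod_{t\in T}(\tilde c_{j,t})_{k_t}\big)\big(\prod_{t\in T}(\tilde p_i)_{k_t}\big)$ over coordinate tuples $(k_t)_{t\in T}$. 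This exhibits a factorization $N=\Phi\,\Psi^{\mathsf T}$ over $\mathbb{F}_2$, where $\Phi_i$ and $\Psi_j$ are indexed by pairs $(T,(k_t)_{t\in T})$; the inner dimension is $\sum_{s=1}^{d}\binom{d}{s}(n+1)^s=(n+2)^d-1$. Hence $\operatorname{rank}_{\mathbb{F}_2}(N)\le(n+2)^d$, and with the lower bound this yields $K\le(n+2)^d=n^{O(d)}$.

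The hard part is the rank upper bound, and the reason the obvious route fails is instructive: because the constraint rows $c_{j,t}$ may be dense, a Fourier/character estimate over $\mathbb{R}$ only bounds the rank by the size of the union of the $K$ dual subspaces, which can be exponential. The idea that breaks this is to stay over $\mathbb{F}_2$ and notice that non-membership is an $\mathrm{OR}$ of merely $d$ affine-bilinear forms, hence a polynomial of degree $\le d$ in the bilinear quantities $\langle\tilde c_{j,t},\tilde p_i\rangle$; it is this bounded degree — not any sparsity of the $c_{j,t}$ — that forces the factorization rank down to $n^{O(d)}$.
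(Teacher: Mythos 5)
Your proof is correct, and it shares the paper's central idea --- build a matrix that equals $J-I$ from private points guaranteed by condition (2), lower-bound its $\mathbb{F}_2$-rank by $K-1$, and upper-bound it by $n^{O(d)}$ using the fact that non-membership in a co-dimension-$d$ coset is an OR of $d$ (affine-)linear forms --- but your route to that matrix is genuinely streamlined in two places. First, the paper does not apply the rank argument to $\mathcal{C}$ directly: it first double-counts incidences $P=2^{n-d}|\mathcal{C}|$ and reduces to bounding, for each fixed point $x$, the number $k$ of certificates through $x$; translating by $x$ makes these genuine subspaces $V_i=\{y:C_iy=0\}$, so the private points satisfy homogeneous systems and the matrix $G=[C_1;\dots;C_k][x_1,\dots,x_k]$ has $\ranktwo(G)\le n$ for free. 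Your homogenization trick $\tilde p_i=(p_i,1)$, $\tilde C_j=[C_j\mid r_j]$ removes the need for this localization entirely and bounds $K$ in one shot (at the cost of $n\mapsto n+1$), which also saves the paper's final factor of $2^d$ from converting $P$ back to $|\mathcal{C}|$. Second, where the paper bounds the rank of the entrywise OR $M=G^1\vee\cdots\vee G^d$ iteratively via $\ranktwo(A\vee B)\le 3\,\ranktwo(A)\ranktwo(B)$ (giving $(3n)^d$), you expand $\mathrm{OR}$ as the multilinear polynomial $\sum_{\emptyset\ne T}\prod_{t\in T}b_t$ and exhibit an explicit factorization of inner dimension $(n+2)^d-1$; these are the same tensor-product phenomenon, but your version gives a slightly cleaner constant and makes the low-rank decomposition fully explicit. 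Both arguments are valid; yours is arguably the more self-contained write-up, while the paper's localization step is what lets it phrase the key combinatorial fact as a standalone statement about families of subspaces none of which lies in the union of the others.
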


\begin{proof}
Let $P$ be the number of pairs $(x, C)$ that $x\in C$ and $ C\in\mathcalc$.
Since $|C|=2^{n-d}$ for each $C$,
\begin{equation}\label{eqn:P}
P=2^{n-d}\ |\mathcalc|.
\end{equation}
For each $x\in\{0, 1\}^n$, let $S_1$, $S_2$, ..., $S_k\in\mathcalc$ be those
that contains $x$. Then $V_i\defeq x+S_i$, $1\le i\le k$, are $n-d$-dimensional subspaces none of which
is a subset of the union of the rest. We show below any such set of subspaces must
have $k=n^{O(d)}$. Thus $P =  2^n n^{O(d)}$. Together with Eqn.~(\ref{eqn:P}), this implies the conclusion
that $|\mathcalc|=n^{O(d)}$.

Let $C_i\in\{0, 1\}^{d\times n}$ such that $V_i=\{x: C_ix=0\}$, $1\le i\le k$.
For any $i$, let $x_i\in V_i$ be such that $x_i\not\in  \bigcup_{j\ne i} V_j$.
Then $C_ix_i=0$, but $C_jx_i\neq 0$ for all $j\neq i$.
Consider a $kd\times k$ matrix
$$G=\left[\begin{array}{c}C_1\\ C_2 \\ \cdots \\ C_k\end{array}\right][x_1,x_2\cdots,x_k].$$

Let $\ranktwo$ denote the rank over filed $\mathbb{F}_2$.
Then
$\ranktwo(G)\leq n$ from the above factorization of $G$.
Represent $G$ by a $k\times k$ block matrix
$a_{ij}$, where each block $a_{ij}$ is a $d\times 1$ vector.

For each $t$, $1\le t\le d$, define the $k\times k$ submatrix $G^t = [a_{ij}^t]_{1\le i, j\le k}$,
where $a_{ij}^t$ is the $t$-th element of $a_{ij}$. Since $G^t$ is a submatrix of $G$,
$\ranktwo(G^t)\leq\ranktwo(G)\leq n$.

Let $M=G^1\vee G^2 \vee \cdots \vee G^d$ be the entry-wise
conjunction of $G^1, G^2, \cdots, G^d$.
Notice that for any matrix $A$ and $B$,
$A\vee B = A+ B+ A\odot B$, where $A\odot B$ is the entry-wise product of $A$ and $B$.
Since $\ranktwo(A\odot B)\le \ranktwo(A)\ranktwo(B)$, we have
\[\ranktwo(A\vee B)\le \ranktwo(A)+\ranktwo(B)+\ranktwo(A\odot B) \leq 3\ranktwo(A)\ranktwo(B).\]
Thus $\ranktwo(M) < (3n)^d$. On the other hand,
from the fact that $a_{ij}=0$ iff $i=j$, $M=I-J$, where $I$ is the identity matrix
and $J$ the all $1$ matrix. Thus $\ranktwo(M)\geq
\ranktwo(I)-\ranktwo(J)=k-1$. This implies
$k=|\mathcalv|\leq (3n)^d$.
\end{proof}

The following conjecture, if true, would imply that $\dc(F)$ is polynomially related to $D_\oplus(f)$ (as well as $C_\oplus(f)$),
by the Aho-Ullman-Yannakakis Theorem and Theorem~\ref{thm:dc}.
\begin{conjecture}\label{conj:parity_optimal}
For any XOR function $F$ based on $f$, $N^1(F)=\Omega(C^1_\oplus(f))$.
\end{conjecture}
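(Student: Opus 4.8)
The plan is to attack the conjecture through the rectangle-cover characterization of non-deterministic communication complexity, reducing it to an extremal question about the additive structure of $f^{-1}(1)$. Write $U\defeq f^{-1}(1)$ and let $M_F$ be the communication matrix of $F$. A combinatorial rectangle $A\times B$, with $A, B\subseteq\{0, 1\}^n$, is $1$-monochromatic for $M_F$ if and only if $A+B\subseteq U$, where $A+B\defeq\{x+y:x\in A,\ y\in B\}$. Since $N^1(F)=\lceil\log_2\chi_1(F)\rceil$, where $\chi_1(F)$ is the least number of $1$-monochromatic rectangles covering the $1$-entries of $M_F$~\cite{Kushilevitz:1997:book}, proving the conjecture is equivalent to showing that every cover of $\{(x, y):x+y\in U\}$ by sumset-rectangles $A_k+B_k\subseteq U$ uses $2^{\Omega(C^1_\oplus(f))}$ pieces.

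For the lower bound I would use a fooling-set argument anchored at a hardest $1$-input. By the shift invariance of both $C^1_\oplus$ and $N^1$, I may assume the maximum defining $C^1_\oplus(f)$ is attained at $x=0$, so that $f(0)=1$ and, setting $d\defeq C^1_\oplus(f)$, a largest subspace through $0$ on which $f\equiv1$ has dimension $n-d$. Consider diagonal inputs $(x, x)$ for $x$ in some $X\subseteq\{0, 1\}^n$. Each is a $1$-input since $x+x=0\in U$, and two of them, for distinct $x, x'$, share a $1$-rectangle only when $f(x+x')=1$. Hence any $X$ with $f(x+x')=0$ for all distinct $x, x'\in X$ gives a $1$-fooling set of size $|X|$ and the bound $N^1(F)\ge\log_2|X|$. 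Such an $X$ is precisely an independent set in the Cayley graph $\mathrm{Cay}(\{0, 1\}^n,\ U\setminus\{0\})$, so the first goal would be to prove that this independence number is at least $2^{\Omega(d)}$. As a consistency check, for the AND function $U=\{1^n\}$ and $d=n$, and choosing one point from each complementary pair gives $|X|=2^{n-1}$, so $N^1(F)\ge d-1$, matching the conjecture.

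The step I expect to be the main obstacle is the passage from the affine parameter $d=C^1_\oplus(f)$ to a purely additive guarantee about $U$. By definition $d$ constrains only the affine structure of $U$ near the hard point: it certifies that no affine subspace of dimension $n-d+1$ through $0$ lies inside $U$. The independence number above, and more generally the cover number $\chi_1(F)$, are instead governed by the arbitrary sumset structure of $U$: a covering rectangle $A+B\subseteq U$ need not be affine, and one cannot a priori exclude large but highly non-affine sumsets inside $U$ that let a few rectangles absorb exponentially many fooling pairs. A tempting intermediate route is to locate a $d$-dimensional coset on which $f$ restricts to an AND-like point-indicator and then lift the AND fooling set, but $C^1_\oplus(f)=d$ does not by itself yield such a restriction at the full hard dimension; the AND-restrictions produced in the proof of Theorem~\ref{thm:bs} are tied to minimal sensitive blocks and need not realize $d$.

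Finally, I would use the Log-Rank Conjecture connection as both motivation and a gauge of difficulty. The eigenvalues of the Cayley graph are the Fourier coefficients of the indicator of $U$, while $\rank(M_F)$ over the reals equals the Fourier sparsity of $f$; consequently any spectral (Hoffman-type) bound on the independence number, or any rank-based bound on $\chi_1(F)$, immediately entangles the problem with the same Fourier--affine dictionary underlying the Log-Rank Conjecture for XOR functions noted after Theorem~\ref{nc}. This entanglement is, I expect, why the statement is posed as a conjecture: what seems to be required is either a genuinely new argument bounding the covering efficiency of non-affine sumset-rectangles, or a converse to the essential-cover counting of Lemma~\ref{lm:cover}.
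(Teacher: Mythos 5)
This statement is posed in the paper as an open conjecture; the paper offers no proof of it, so there is nothing to compare your argument against. What you have written is, by your own admission, a plan rather than a proof: the step you flag as ``the main obstacle''---passing from the affine parameter $d=C^1_\oplus(f)$, which only forbids large affine subspaces of $1$-inputs through the hard point, to a lower bound on covers of $f^{-1}(1)$ by arbitrary sumset rectangles $A+B$---is exactly the content of the conjecture, and you do not supply it. The surrounding framing is sound (the rectangle-cover characterization of $N^1$, the observation that $1$-monochromatic rectangles for an XOR function are precisely sets $A\times B$ with $A+B\subseteq f^{-1}(1)$, and the identification of diagonal fooling sets with independent sets in the Cayley graph on $\{0,1\}^n$ with connection set $f^{-1}(1)\setminus\{0\}$), but none of it closes the gap. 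One small inconsistency: your AND sanity check uses $U=\{1^n\}$, for which $f(0)=0$, so the diagonal entries $(x,x)$ are not $1$-inputs and the diagonal fooling set does not apply as stated; you need to shift to $U=\{0\}$ first (where the Cayley graph is empty and the independence number is $2^n$), or use the anti-diagonal.

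More importantly, the specific route you propose is provably too weak to establish the conjecture in general, not merely difficult. Take $f$ uniformly random with $f(0)=1$, so $U\defeq f^{-1}(1)$ is essentially a random subset of density $1/2$. A union bound over affine subspaces shows that with high probability no affine subspace of codimension at most $n-O(\log n)$ lies inside $U$, so $d=C^1_\oplus(f)=\Omega(n)$. But the independence number of a random Cayley graph on $\mathbb{F}_2^n$ with a density-$1/2$ connection set is only polynomial in $n$, so your diagonal fooling set has $|X|=n^{O(1)}$ and certifies only $N^1(F)=\Omega(\log n)$, exponentially short of the required $\Omega(d)=\Omega(n)$. (This does not refute the conjecture---$N^1(F)$ is presumably large for such $f$ for other reasons---but it shows that independent sets in the Cayley graph cannot be the mechanism.) Any successful attack must therefore bound the covering efficiency of non-affine sumset rectangles directly, e.g.\ by a converse to the counting in Lemma~\ref{lm:cover} or by rank/discrepancy methods, which, as you note, runs straight into the same difficulties as the Log-Rank Conjecture for XOR functions discussed after Theorem~\ref{nc}.
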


A major open problem on deterministic communication complexity is the Log-Rank Conjecture~\cite{Lovasz:logrank}.
Denote by $\rank(F)=\rank([F(x, y)]_{x, y\in\{0, 1\}^n})$, where $\rank(\cdot)$ is the rank over the reals.
The Log-Rank Conjecture states that
\begin{equation}\label{logrank}
\dc(F)=\log^{O(1)} \rank(F),\quad\textrm{for any $F$}.
\end{equation}
The study of XOR functions is partly motivated by the Log-Rank
Conjecture. Denote by \[\|\hat f\|_0 =  | \{ \hat f_w \ne 0:
w\in\{0, 1\}^n\}|,\] where \[\hat f_w=\frac{1}{2^n} \sum_{x\in\{0,
1\}^n} (-1)^{\la x, w\ra} f(x)\] is the Fourier coefficient of $f$
on $w$. Then for any XOR function $F$ based on $f$,  $\rank(F)
=\|\hat f\|_0$. Our conjecture below, if true, would imply the
Log-Rank Conjecture on XOR functions.

\begin{conjecture}\label{conj:fd}
For any Boolean function $f:\{0, 1\}^n\rightarrow\{0, 1\}$,
$D_\oplus(f)$ and $C_\oplus(f)$ are
polynomially related with $\log\|\hat f\|_0$.
\end{conjecture}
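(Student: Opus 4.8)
The plan is to treat the two directions of the claimed polynomial relation separately, since only one of them appears to be within reach of current techniques. Throughout I use that, by the earlier Proposition $C_\oplus(f)\le D_\oplus(f)$ and by Theorem~\ref{thm:dc} $D_\oplus(f)\le C_\oplus^0(f)C_\oplus^1(f)\le C_\oplus(f)^2$, the two quantities $D_\oplus(f)$ and $C_\oplus(f)$ are already polynomially related to each other. Hence it suffices to prove that just one of them is polynomially related to $\log\|\hat f\|_0$, and I may pass freely between the $D_\oplus$ and the $C_\oplus$ formulation whichever is more convenient.

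First I would dispose of the easy direction, $\log_2\|\hat f\|_0=O(D_\oplus(f))$. Let $T$ be an optimal parity decision tree for $f$, of depth $d=D_\oplus(f)$. Its leaves partition $\{0,1\}^n$ into at most $2^d$ cosets $A_1,\dots,A_m$, each of co-dimension at most $d$, on each of which $f$ takes a constant value $c_\ell$, so that $f=\sum_{\ell=1}^m c_\ell\,\mathbf{1}_{A_\ell}$. Writing $A_\ell=\{x:C_\ell x=r_\ell\}$ with $\rank(C_\ell)=k_\ell\le d$, the indicator $\mathbf{1}_{A_\ell}$ is Fourier-supported exactly on the row space of $C_\ell$, a set of size $2^{k_\ell}\le 2^d$. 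By subadditivity of Fourier support under linear combinations, $\|\hat f\|_0\le\sum_{\ell=1}^m 2^{k_\ell}\le 2^d\cdot 2^d$, i.e. $\log_2\|\hat f\|_0\le 2D_\oplus(f)$; the same bound transfers to $C_\oplus$ through the relations above.

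The hard direction, $D_\oplus(f)=\log^{O(1)}\|\hat f\|_0$, is the crux and is at least as strong as the Log-Rank Conjecture~(\ref{logrank}) specialized to XOR functions: since $\rank(F)=\|\hat f\|_0$ for $F(x,y)=f(x+y)$ and $\dc(F)\le 2D_\oplus(f)$, a polynomial bound here immediately yields (\ref{logrank}) on this class. By the reductions above it is enough to show $C_\oplus(f)=\log^{O(1)}\|\hat f\|_0$, and unwinding the definition of $C_\oplus$ this amounts to the following geometric statement: every input $x$ lies in a coset of co-dimension $\log^{O(1)}\|\hat f\|_0$ on which $f$ is constant. This is the parity/Fourier analogue of covering the communication matrix by few large monochromatic rectangles, the structure underlying all known approaches to Log-Rank. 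The intended route is therefore to prove such a ``large monochromatic coset'' lemma directly from Fourier sparsity, and then, for the stronger $D_\oplus$ bound, to recurse: query the few parities cutting out one such coset, answer immediately on the branch landing inside it, and continue on the complementary cosets, where the sparsity should have strictly decreased.

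The main obstacle is precisely the existence of such a low-co-dimension monochromatic coset, and the naive hyperplane-restriction recursion does not supply it. Restricting $f$ to a hyperplane $\{x:\la c,x\ra=b\}$ folds the spectrum, replacing each pair $\{w,w+c\}$ by the single coefficient $\hat f_w+(-1)^b\hat f_{w+c}$: a ``paired'' coset with both $w,w+c$ in the support can cancel on one branch, but a ``singleton'' coset with only one of them in the support survives on \emph{both} branches, so a single query need not reduce $\|\hat f\|_0$ at all. A counting argument over all $c$ shows that when $\|\hat f\|_0\ll 2^n$ one cannot in general guarantee a query folding a constant fraction of the support. Extracting the monochromatic coset therefore requires genuine structural information about the support $\{w:\hat f_w\ne 0\}$ of a Fourier-sparse Boolean function—for instance additive-combinatorial structure of this set combined with the granularity constraints that Booleanity forces on the values $\hat f_w$. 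Establishing such structure is exactly what is missing, which is why the statement is posed as a conjecture rather than proved.
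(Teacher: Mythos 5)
You have not proved the statement, and you say so yourself; to be clear, neither does the paper --- this is Conjecture~\ref{conj:fd}, stated as an open problem precisely because the direction you call ``hard'' is unknown, so there is no proof of the paper's to compare yours against. What you do establish is the easy half, and that part is correct: the leaves of a depth-$d$ parity decision tree partition $\{0,1\}^n$ into at most $2^d$ cosets of co-dimension at most $d$, the indicator of each such coset is Fourier-supported on the row space of its defining matrix (a set of size at most $2^d$), and the support of a sum is contained in the union of the supports, giving $\|\hat f\|_0\le 2^{2d}$ and hence $\log_2\|\hat f\|_0\le 2D_\oplus(f)$. Combined with $C_\oplus(f)\le D_\oplus(f)\le C^0_\oplus(f)C^1_\oplus(f)\le C_\oplus(f)^2$ from Theorem~\ref{thm:dc}, this is exactly the trivial direction of the claimed polynomial relation.

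The genuine gap is the converse, $D_\oplus(f)=\log^{O(1)}\|\hat f\|_0$, and your diagnosis of why it resists attack is accurate: since $\rank(F)=\|\hat f\|_0$ for the XOR function $F(x,y)=f(x+y)$ and $\dc(F)\le 2D_\oplus(f)$, this direction is at least as strong as the Log-Rank Conjecture restricted to XOR functions, and the natural recursion founders because restricting to a hyperplane folds the spectrum pairwise and need not shrink the sparsity when the support lacks the additive structure one would need to exploit. Your proposed ``low-co-dimension monochromatic coset'' lemma is the right target, but you give no argument for it, and no such argument from Fourier sparsity alone is supplied by the paper or by the techniques it uses. In short, your write-up is an honest and correct account of the state of the problem --- a proof of one inequality plus a clear articulation of the obstruction to the other --- but it is not a proof of the conjecture and should not be presented as one.
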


\section{Acknowledgments}\noindent
We thank Xiaoming Sun and Andrew Yao for helpful discussions.

\bibliographystyle{abbrv}
\bibliography{my}

\begin{thebibliography}{10}

\bibitem{AUY}
A.~V. Aho, J.~D. Ullman, and M.~Yannakakis.
\newblock On notions of information transfer in vlsi circuits.
\newblock In {\em Proceedings of the fifteenth annual ACM symposium on Theory
  of computing}, pages 133--139, New York, NY, USA, 1983.

\bibitem{Beals:2001:QLB}
R.~Beals, H.~Buhrman, R.~Cleve, M.~Mosca, and R.~de~Wolf.
\newblock Quantum lower bounds by polynomials.
\newblock {\em Journal of the ACM}, 48(4):778--797, July 2001.

\bibitem{Buhrman:decision:survey}
H.~Buhrman and R.~de~Wolf.
\newblock Complexity measures and decision tree complexity: a survey.
\newblock {\em Theoretical Computer Science}, 288(1):21--43, 2002.

\bibitem{Kushilevitz:learning}
E.~Kushilevitz and Y.~Mansour.
\newblock Learning decision trees using the fourier spectrum.
\newblock In {\em Proceedings of the twenty-third annual ACM symposium on
  Theory of computing}, pages 455--464, New York, NY, USA, 1991. ACM.

\bibitem{Kushilevitz:1997:book}
E.~Kushilevitz and N.~Nisan.
\newblock {\em Communication Complexity}.
\newblock Cambridge University Press, Cambridge, 1997.

\bibitem{Lee:survey}
T.~Lee and A.~Shraibman.
\newblock Lower bounds in communication complexity.
\newblock {\em Foundations and Trends in Theoretical Computer Science},
  3(4):263--398, 2009.

\bibitem{Lovasz:logrank}
L.~Lov{\'a}sz and M.~Saks.
\newblock Lattices, mobius functions and communication complexity.
\newblock In {\em 29th Annual Symposium on Foundations of Computer Science
  ({FOCS} '88)}, pages 81--90, Los Angeles, Ca., USA, Oct. 1988. IEEE Computer
  Society Press.

\bibitem{Nisan:CREW}
N.~Nisan.
\newblock {CREW} {PRAM}s and decision trees.
\newblock {\em SIAM Journal on Computing}, 20(6):999--1007, Dec. 1991.

\bibitem{Sherstov:survey}
A.~A. Sherstov.
\newblock Communication lower bounds using dual polynomials.
\newblock {\em Bulletin of the European Association for Theoretical Computer
  Science}, 95:59--93, June 2008.

\bibitem{Zhang:XOR}
Z.~Zhang and Y.~Shi.
\newblock Communication complexities of symmetric {XOR} functions.
\newblock {\em Quantum Information and Computation}, 9:255--263, 2009.

\end{thebibliography}
\end{document}